\newtheorem{theorem}{Theorem}
\def\BibTeX{{\rm B\kern-.05em{\sc i\kern-.025em b}\kern-.08em
    T\kern-.1667em\lower.7ex\hbox{E}\kern-.125emX}}
\begin{document}

\title{A Remark on Downlink Massive Random Access}

\author{
\IEEEauthorblockN{Yuchen Liao and Wenyi Zhang}
\IEEEauthorblockA{\textit{Department of Electronic Engineering and Information Science} \\
\textit{University of Science and Technology of China}\\
\textit{Hefei, China}\\
Email: wenyizha@ustc.edu.cn}
\thanks{The work was supported in part by the National Key Research and Development Program of China (2025YFF0514401) and the National Natural Science Foundation of China (62231022).}
}

\maketitle

\begin{abstract}
In downlink massive random access (DMRA), a base station transmits messages to a typically small subset of active users, selected randomly from a massive number of total users. Explicitly encoding the identities of active users would incur a significant overhead scaling logarithmically with the number of total users. Recently, via a random coding argument, Song, Attiah and Yu have shown that the overhead can be reduced to within some upper bound irrespective of the number of total users. In this remark, recognizing that the code design for DMRA is an instance of covering arrays in combinatorics, we show that there exists deterministic construction of variable-length codes that incur an overhead no greater than $1 + \log_2 e$ bits.
\end{abstract}

\section{Introduction}\label{sec:intro}

Suppose that a transmitter (a.k.a. base station) wishes to transmit messages to $k$ active receivers (a.k.a. users), selected randomly from $n$ total users, where typically $k \ll n$. Each active user only knows that it is set to receive its own message, but is unaware of which other $k - 1$ active users are. For this downlink massive random access (DMRA) setup, if the base station explicitly encodes the identities of the $k$ active users, an overhead of $k \log_2 n$ bits is incurred.\footnote{This corresponds to a scheme in which a string of length $\log_2 n$ bits is used for representing the identity of each of the $k$ active users. A slightly more efficient scheme is to encode the identities of the $k$ active users altogether, using $\log_2 \binom{n}{k}$ bits, which is also close to $k \log_2 n$ for $k \ll n$.}

Is such an overhead necessary for DMRA? In a recent work \cite{song25tit}, via a random coding argument, Song, Attiah and Yu have shown that the overhead can be dramatically reduced, to within some upper bound irrespective of $n$.

Such a seemingly counterintuitive result can be explained by considering a case where the $k$ active users are drawn uniformly randomly from the $\binom{n}{k}$ possible combinations, and their messages are independent and identically distributed (i.i.d.).\footnote{The results in \cite{song25tit} also cover the more general case where message distributions are exchangeable. In the current paper we focus on the i.i.d. case for simplicity.} For simplicity, suppose that each active user's message is a single bit. Generate a codebook as a sequence of length-$n$ binary vectors, indexed as $1, 2, \ldots$. Given $k$ active users and their associated messages, search sequentially among the vectors in the generated codebook until finding a vector whose elements at positions corresponding to the $k$ active users exactly match these $k$ active users' messages. Finally, use a variable-length uniquely decodable lossless code (e.g., Shannon code) to encode the index of the found vector. For example, with codebook $\{0000, 1111, 0101, 1010, 1100, \ldots\}$, when users $1$ and $3$ are active with messages $1$ and $0$ respectively, one needs to search for a vector of pattern 1X0X, and the first such vector is $1100$ with index $5$.

Under a random coding argument, one considers the ensemble performance averaged over all codebooks. Let the vectors in the random codebook be mutually independent, each consisting of $n$ i.i.d. Bernoulli$(1/2)$ random variables. Therefore, for given $k$ active users and their associated messages, each vector in the random codebook matches them with probability $2^{-k}$, and the first matched index is a geometric random variable with mean $2^k$. Such a geometric random variable has an entropy slightly larger than $k$ bits, thereby corresponding to a variable-length lossless code of rate independent of $n$.

The ensemble analysis in \cite{song25tit} basically describes the averaged performance as we randomly generate codebooks, but in reality we typically need to work with a fixed codebook repeatedly. Therefore, it still remains to answer how to construct a codebook that can approach the ensemble performance.

In this work, we point out that the DMRA model is intimately connected with a classical topic in combinatorics known as covering arrays. Briefly, a covering array, usually denoted by $\mathsf{CA}(M; k, n, q)$, is an $M \times n$ array, whose symbols are over an alphabet of size $q$, so that for every possible size-$k$ subset of $\{1, 2, \ldots, n\}$, every $q^k$ possible combinations of symbols appears in at least one row of the array. Clearly, when each active user's message is over a size-$q$ alphabet, $\mathsf{CA}(M; k, n, q)$ is a natural choice of DMRA codebook.

In Table \ref{tab:concise_codebook_en}, we illustrate an instance of $\mathsf{CA}(5; 2, 4, 2)$ used as a DMRA codebook. 

\begin{table}[htbp]
    \vspace{-0.7em}
    \centering
    \caption{Example of $\mathsf{CA}(5; 2, 4, 2)$ as DMRA codebook}
    \label{tab:concise_codebook_en}
    \begin{tabular}{|c|c|c|}
        \hline
        \textbf{Index} & \textbf{Vector} & \textbf{Covered patterns} \\ 
        \hline
        1 & (0,0,0,0) & 00XX 0X0X 0XX0 X00X X0X0 XX00 \\ 
        \hline
        2 & (1,1,1,0) & 11XX 1X1X 1XX0 X11X X1X0 XX10 \\ 
        \hline
        3 & (1,1,0,1) & 1X0X 1XX1 X10X X1X1 XX01 \\ 
        \hline
        4 & (1,0,1,1) & 10XX X01X X0X1 XX11 \\ 
        \hline
        5 & (0,1,1,1) & 01XX 0X1X 0XX1 \\ 
        \hline
    \end{tabular}
    \vspace{-0.7em}
\end{table}

The numbers of covered patterns over the indices are $6, 6, 5, 4, 3$, respectively, summing up to $\binom{4}{2} 2^2 = 24$ possible source-message patterns. The probabilities of the indices $\{1,2,3,4,5\}$ are thus $\{1/4,1/4,5/24,1/6,1/8\}$, respectively, leading to an entropy of $2.28$ bits, only slightly higher than the net payload of $2$ bits for two active users. The Huffman code for encoding the indices has an expected length of $2.29$ bits. In contrast, explicitly encoding the identities of two active users incurs an overhead of $2 \log_2 4 = 4$ bits, resulting in a total of $2 + 4 = 6$ bits, significantly higher than $2.29$ bits when employing the covering array in Table \ref{tab:concise_codebook_en}.

In this remark, we show that for general DMRA parameters, there exist deterministic covering array constructions such that the resulting variable-length codes incur an overhead no greater than $1 + \log_2 e$ bits, regardless of the number of total users. This result is established by exploiting a property of covering arrays regarding how the number of uncovered patterns evolves as the index increases. Roughly speaking, there exist covering arrays for which the number of uncovered patterns decays at least geometrically fast with index, and greedy algorithms suffice to construct such covering arrays.

The remaining part of this remark is organized as follows. Section \ref{sec:rel.} briefly reviews related works on massive access and covering arrays. Section \ref{sec:pro.} describes the problem formulation of DMRA. Section \ref{sec:per.} establishes the performance of DMRA codebook constructions based on covering arrays. Section \ref{sec:numerical} presents some numerical results for illustration. Finally, Section \ref{sec:conclusion} concludes this remark.

\section{Related Works}\label{sec:rel.}

This section reviews the literature from two key domains underpinning our work: massive machine-type communications (mMTC) in wireless communication, and the combinatorial theory of covering arrays.

\subsection{Massive Access}

Massive access, also termed mMTC, is a pivotal scenario in 5G and beyond wireless communication, supporting a vast network of devices with sporadic traffic, where only a typically small subset is active simultaneously\cite{chen20jsac}. This sporadic nature of traffic presents unique challenges for resource allocation and protocol design.

Substantial research efforts have been devoted to uplink massive access scenarios. Key considerations therein include active user identification and unsourced random access. The former has been treated in many-access channels (MnAC) (see, e.g., \cite{chen17tit, robin21isit, ravi21tit}), while the latter has been proposed in \cite{polyanskiy17isit} and studied subsequently (see, e.g., \cite{fengler19arxiv1, gao22tit}).

In contrast, the downlink massive access scenario — where a base station transmits to a typically small random subset of active users — has received comparatively less attention \cite{song25tit,kang21tit,song22isit,kalor22tc}. Our work adopts the DMRA model proposed in \cite{song25tit}, and studies it through the lens of covering arrays, as to be reviewed in the next subsection. 

\subsection{Covering Arrays}

Covering arrays has a rich history in combinatorics. Their applications have been mainly in software and hardware testing \cite{sloane93jcd,colbourn04lm,torres13iss,lawrence11tejoc}, where they are used to generate test suites that cover all interactions among a fixed number of system factors.

For apparent reasons, a central combinatorial problem regarding covering arrays has been to determine the smallest size of a covering array for a given tuple of parameters. That is, we want to find the minimum $M$, denoted by $M(k, n, q)$, such that $\mathsf{CA}(M; k, n, q)$ exists. For general $(k, n, q)$, the exact value of $M(k, n, q)$ remains an open problem in general, except for the special case of $k = q = 2$ \cite{katona, kleitman_and_spencer}. 
Extensive research has explored upper bounds for $M(k,n,q)$ under various parameter settings \cite{torres12ins, torres18dcc, torres17optl, torres18mpe}. The currently best-known results have been cataloged in repository \cite{colbourn25online}. Furthermore, asymptotic bounds have been explored as $n$ grows for fixed $k$ and $q$, providing insights into the scalability of covering arrays \cite{gargano93gac,bounds_u, sarkar17siam, francetic17jcd}.

A diverse array of algorithms have been developed for constructing covering arrays (see \cite{torres19access} and references therein), which can generally be categorized into three types: combinatorial methods, searching methods, and hybrid methods. Combinatorial methods offer rapid generation and currently represent the most viable approach for large values of $(k, n, q)$, though they are limited to restricted parameter combinations. Searching methods, conversely, can handle arbitrary values of $(k, n, q)$ but face limitations imposed by computational resources. In the next section, our approach to DMRA coding will be based on a searching method, namely row-by-row greedy construction algorithms \cite{cohen97aetg,bryce07density}, which do not necessarily yield the minimum $M(k, n, q)$, but suffice for our purposes.

\section{Problem Formulation}\label{sec:pro.}

This section formalizes the DMRA coding problem, which has already been outlined in Section \ref{sec:intro}.

A \textbf{user-pattern vector} $\uline{S} \in {\{0,1\}}^n$ indicates the activation status of users, in which:
\begin{equation}
    S_i =
    \left\{
    \begin{array}{ll}
         0, \ \ \ \ \ \ &\text{if\ user\ }i\text{\ is\ inactive,}\\
         1, \ \ \ \ \ \ &\text{if\ user\ }i\text{\ is\ active.}
    \end{array}
    \right.
\end{equation}
It follows that $\uline{S}$ satisfies $\sum_{i=1}^{n} S_i = k$, ensuring exactly $k$ active users.

A \textbf{message vector} $\uline{W} \in {\{0,1,\ldots,q-1,\phi \}}^n$ contains the messages intended for the users at the corresponding indices:
\begin{equation}
    W_i =
    \left\{
    \begin{array}{ll}
         \sim\text{uniform\ in\  }\{0,1,\ldots,q-1\}, \ \ \ &\text{if\  }S_i = 1,\\
         \phi, \ \ \ &\text{if\ }S_i = 0.
    \end{array}
    \right.
\end{equation}
where $\phi$ indicates that no message (i.e., void) is intended for an inactive user. The message for each active user is assumed to be uniformly distributed over the message alphabet, and all these messages are mutually independent. This will simplify the analysis in Section \ref{sec:per.}.

We focus on the efficient representation of identities and messages of active users, and hence we frame the DMRA coding problem as a lossless source coding problem. The effects of noisy channels are not considered, as in practice channel coding can be applied additionally to encapsulate the lossless source codewords.

A DMRA code has a codebook $\mathcal{C}$ which is apriori known to the base station and all users, consisting of $M$ possibly variable-length binary codewords:
\begin{equation}
    \mathcal{C}=(c{(1)},c{(2)},\ldots,c{(M)}).
\end{equation}

\textbf{Encoding:} 
An encoder is a mapping $g$ that maps every $(\uline{s}, \uline{w})$-pair to an index $m$ among $\{1, \ldots, M\}$:
\begin{equation}
    m = g(\uline{s}, \uline{w}),
\end{equation}
and $c(m)$ is used for representing $(\uline{s}, \uline{w})$, which will be termed a pattern subsequently. It is crucial to note that this mapping need not be injective; in other words, several different patterns may be mapped to the same index $m$.

\textbf{Decoding:} Each active user $i$ decodes $c(m)$ to retrieve its intended message $w_i$; inactive users do not engage in decoding, of course. The decoder mapping for user $i$ can thus be written as:
\begin{equation}
    \hat{w}_i = f(c(m), s_i).
\end{equation}
When $s_i = 0$, we immediately have $\hat{w}_i = \phi$. The key is to ensure that when $s_i = 1$, $\hat{w}_i = w_i$. 

We denote the length of $c(m)$ by $\ell(m)$.

The random pattern $(\uline{S}, \uline{W})$ induces a probability distribution on the encoder output, $g(\uline{S}, \uline{W})$, over the alphabet $\{1, \ldots, M\}$. The coding efficiency can therefore be measured by the expected codeword length:
\begin{equation}
    \bar{\ell} = \mathbb{E}[\ell(g(\uline{S}, \uline{W}))].
\end{equation}

\section{Main Result}\label{sec:per.}

This section develops a general upper bound for the optimal $\bar{\ell}$, as shown in the following theorem.

\begin{theorem}\label{thm:main}
    For any $(n, k, q)$, $k \leq n$, $q \geq 2$, there exist deterministic DMRA codes such that the expected codeword length satisfies:
    \begin{equation}\label{var.bound}
        \bar{\ell} < k \log q + 1 + \log e.
    \end{equation}
\end{theorem}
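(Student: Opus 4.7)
The plan is to construct the codebook rows by a greedy algorithm, and then bound the expected length of a Shannon code on the induced encoder output via a comparison to a geometric distribution. Consider rows $c(1), c(2), \ldots$ drawn from the alphabet $\{0, 1, \ldots, q-1\}^n$, and say that row $c(m)$ \emph{covers} a pattern $(\uline{s}, \uline{w})$ if $c_i(m) = w_i$ for every active position $i$. The total number of patterns is $U_0 = \binom{n}{k} q^k$, and for any fixed pattern, a uniformly random row covers it with probability exactly $q^{-k}$. I would append rows greedily: at each step, choose a row that covers as many of the currently uncovered patterns as possible. By averaging, such a row covers at least a $q^{-k}$ fraction of the currently uncovered patterns. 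Letting $U_m$ denote the number of uncovered patterns after $m$ greedy rows, this recursion gives $U_m \le U_0 (1 - q^{-k})^m$, so $U_m$ reaches $0$ at some finite index and the resulting covering-array codebook is finite. Decoding is automatic: user $i$, when active, simply reads $c_i(m)$ from the row index $m$ transmitted by the base station.

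Under the uniform distribution on patterns, the encoder output $M^\star = g(\uline{S}, \uline{W})$, defined as the smallest row index covering the realized pattern, has tail probability $\Pr[M^\star > m] = U_m / U_0 \le (1 - q^{-k})^m$, and hence $\mathbb{E}[M^\star] = \sum_{m \ge 0} \Pr[M^\star > m] \le q^k$. Next I would apply a Shannon--Fano code to $M^\star$, giving $\bar\ell < H(M^\star) + 1$. To bound $H(M^\star)$, I invoke the classical maximum-entropy property of the geometric distribution: among all distributions on $\{1, 2, \ldots\}$ of mean at most $q^k$, entropy is maximized by the geometric distribution with parameter $q^{-k}$. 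A direct evaluation of this maximum entropy, combined with the inequality $-\ln(1-p) \le p/(1-p)$ applied at $p = q^{-k}$, bounds it by $k \log q + \log e$. Combining yields $\bar\ell < k \log q + 1 + \log e$ as claimed.

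The main potential obstacle is justifying the greedy-averaging step at every iteration: one must verify that a new row covering a $q^{-k}$ fraction of the remaining uncovered patterns always exists. This is settled by a clean averaging argument because, regardless of the rows already chosen, a uniformly random vector in $\{0, 1, \ldots, q-1\}^n$ still covers each individual uncovered pattern with probability $q^{-k}$, so covers a $q^{-k}$ fraction of them in expectation. The remaining steps --- the Shannon-code length bound, the max-entropy comparison with a matching-mean geometric distribution, and the algebraic bound on geometric entropy --- are all standard, and together they produce precisely the additive overhead $1 + \log e$ predicted by the theorem.
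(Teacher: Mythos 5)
Your proposal is correct and follows essentially the same route as the paper: a greedy covering-array construction, the averaging/double-counting argument showing each new row removes at least a $q^{-k}$ fraction of uncovered patterns (hence $U_m \le U_0(1-q^{-k})^m$), the bound $\mathbb{E}[g(\uline{S},\uline{W})] \le q^k$, the maximum-entropy comparison with the geometric distribution, and the Shannon-code penalty of one bit. The only differences are presentational (a probabilistic phrasing of the averaging step and a tail-sum computation of the expectation, versus the paper's explicit double counting and summation by parts), so there is nothing substantive to add.
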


\begin{proof}

    As hinted by the example in the introduction, we will adopt covering arrays to construct the DMRA codebook. In a nutshell, we aim at constructing some $\mathsf{CA}(M; k, n, q)$. During encoding, given a $(\uline{s}, \uline{w})$-pair, i.e., a pattern, we first search sequentially through the covering array until finding the first row whose element at the $i$-th position with $s_i = 1$ is equal to $w_i$, for $i = 1, \ldots, n$, and use this index as the encoder output $m$, thus defining the encoder mapping $g$. Since a covering array, by definition, covers all possible patterns, encoding is bound to output an index $m$, and decoding is bound to ensure that whenever $s_i = 1$, $\hat{w}_i = w_i$.

    We say that a pattern $(\uline{s}, \uline{w})$ is ``covered'' by a row in a covering array if the element at the $i$-th position with $s_i = 1$ of the row is equal to $w_i$, for $i = 1, \ldots, n$. Clearly, what the encoder does, as described in the last paragraph, is to find the first row in $\mathsf{CA}(M; k, n, q)$ that covers the pattern for encoding.

    To complete the construction of the DMRA codebook $\mathcal{C}$, we further use a uniquely decodable lossless code, possibly variable-length (e.g., Shannon code), to encode $g(\uline{S}, \uline{W})$. That exploits the possibly non-uniform probability distribution over $g(\uline{S}, \uline{W})$ and substantially reduces the expected codeword length, as will be revealed subsequently.

    Now let us address the key task of constructing the desired $\mathsf{CA}(M; k, n, q)$. We study a greedy strategy, as adopted in \cite{cohen97aetg, bryce07density}. We start with an empty array, covering no pattern, and therefore the number of uncovered patterns is initialized as $U_0 = \binom{n}{k} \cdot q^k$. Then we add a vector from $\mathcal{Q}^n = \{0, \ldots, q - 1\}^n$ as the first row in $\mathsf{CA}(M; k, n, q)$, denoted as $\mathsf{CA}_1$ for short. Any such vector in $\mathcal{Q}^n$ clearly covers $\binom{n}{k}$ patterns, and is hence arbitrarily chosen due to the greedy nature of strategy. Therefore the number of uncovered patterns after adding $\mathsf{CA}_1$ becomes
    \begin{equation}
        U_1 = U_0 - \binom{n}{k} = U_0 (1 - q^{-k}).
    \end{equation}
    We iteratively add new (i.e., not appearing yet) vectors to the covering array, in a greedy fashion that each newly added vector covers as many previously uncovered patterns as possible. That is, if the covering array already contains $\{\mathsf{CA}_1, \ldots, \mathsf{CA}_m\}$, then we choose $\mathsf{CA}_{m + 1}$ as the one in $\mathcal{Q}^n \setminus \{\mathsf{CA}_1, \ldots, \mathsf{CA}_m\}$ that covers the largest number of patterns uncovered by any of $\{\mathsf{CA}_1, \ldots, \mathsf{CA}_m\}$.

    A basic relationship observed in \cite{cohen97aetg, bryce07density} reveals a super-geometric decay along $U_0, \ldots, U_m, \ldots$, numbers of uncovered patterns in consecutive additions of vectors to the covering array. It is given as follows:
    \begin{equation}
        U_{m+1} \leq U_m-\frac{U_m}{q^k} = U_m(1-q^{-k}), \quad m = 0, 1, \ldots\label{recursion}
    \end{equation}

    At this point, let us use (\ref{recursion}) to complete the proof of Theorem \ref{thm:main}, and for the sake of completeness, provide a proof of (\ref{recursion}) after that.

    Since the random pattern $(\uline{S}, \uline{W})$ is uniformly distributed according to the problem formulation in Section \ref{sec:pro.}, evaluating the probability distribution of $g(\uline{S}, \uline{W})$ reduces into simple counting:
    \begin{equation}
        P(m) = \mathrm{Pr}\{g(\uline{S}, \uline{W}) = m\} = \frac{U_{m - 1} - U_m}{U_0}, \quad m = 1, 2, \ldots
    \end{equation}

    Therefore, we can bound the expectation of the encoder output index $g(\uline{S}, \uline{W})$:
    \begin{align}
        \mathbb{E}[g(\uline{S}, \uline{W})] 
        &= \sum_m P(m) m \nonumber\\
        &= \frac{1}{U_0} \sum_m (U_{m - 1} - U_m) m \nonumber\\
        &= \frac{1}{U_0} \sum_m U_m \nonumber\\
        &\leq \frac{1}{U_0} \sum_m U_0 (1 - q^{-k})^m = q^k,
    \end{align}
    where the inequality is from recursively applying (\ref{recursion}).

    For probability distributions supported over positive integers with a given expectation, geometric distribution achieves the maximum entropy. Therefore, we have
    \begin{align}
        H(g(\uline{S}, \uline{W})) &\leq H(\mathrm{GEO}(1/\mathbb{E}[g(\uline{S}, \uline{W})])) \nonumber\\
        &\leq H(\mathrm{GEO}(q^{-k})) \nonumber\\
        &= k \log q - (q^k - 1) \log (1 - q^{-k}) \nonumber\\
        &\leq k \log q + \log e,
    \end{align}
    where we have used the fact that the entropy of geometric distribution is monotonically increasing with its expectation.
    
    Since a uniquely decodable lossless code enforces an integer constraint, we have that the expected codeword length satisfies
    \begin{align}
        \bar{\ell} &= \mathbb{E}[\ell(g(\uline{S}, \uline{W}))] \nonumber\\
        &< H(g(\uline{S}, \uline{W})) + 1 \nonumber\\
        &\leq k \log q + 1 + \log e,
    \end{align}
    establishing (\ref{var.bound}), which can be guaranteed by the Shannon code, for example.

    Now let us provide a proof of (\ref{recursion}), whose idea essentially follows that in \cite[Sec. 3]{cohen97aetg}. Define the set:
    \begin{equation}
        \mathcal{U} = \left\{(v, p): v \in \mathcal{Q}^n, p \text{ covered by } v \right\};
    \end{equation}
    that is, the collection of all possible vectors in a covering array and all the patterns covered correspondingly.

    Suppose that $\mathsf{CA}_1, \ldots, \mathsf{CA}_m$ have been added already. Define the set:
    \begin{align}
        \mathcal{U}^\prime &= \left\{(v, p): v \in \mathcal{Q}^n \setminus \{\mathsf{CA}_1, \ldots, \mathsf{CA}_m\}, \right. \nonumber\\
        & p \text{ covered by } v \text{ but uncovered by any of } \nonumber\\
        & \left. \{\mathsf{CA}_1, \ldots, \mathsf{CA}_m\} \right\};
    \end{align}
    that is, the subset of $\mathcal{U}$ that have not been covered by $\mathsf{CA}_1, \ldots, \mathsf{CA}_m$.

    Then we calculate the cardinality of $\mathcal{U}^\prime$ in two ways:
    \begin{itemize}
        \item For any $v \in \mathcal{Q}^n$, denote by $n(v)$ the number of patterns that have not been covered by $\mathsf{CA}_1, \ldots, \mathsf{CA}_m$ but can be covered by $v$. These are exactly what $\mathcal{U}^\prime$ describes, and therefore,
        \begin{equation}\label{eqn:Uprime1}
            |\mathcal{U}^\prime| = \sum_v n(v).
        \end{equation}
        If $v \in \{\mathsf{CA}_1, \ldots, \mathsf{CA}_m\}$ then obviously $n(v) = 0$, so that these do not affect the calculation.
        \item Note that for any pattern $p$, there are exactly $q^{n - k}$ vectors in $\mathcal{Q}^n$ that cover it because only $k$ positions in a covering vector are specified by $p$. For any $(v, p)$ in $\mathcal{U}^\prime$, all the $q^{n - k}$ covering vectors for $p$ should be contained in $\mathcal{U}^\prime$, because otherwise such a $(v, p)$ would be placed in $\mathcal{U} \setminus \mathcal{U}^\prime$ instead. Furthermore, the number of patterns in $\mathcal{U}^\prime$ is nothing but $U_m$. Therefore we have
        \begin{equation}\label{eqn:Uprime2}
            |\mathcal{U}^\prime| = q^{n - k} U_m.
        \end{equation}
    \end{itemize}

    Comparing (\ref{eqn:Uprime1}) and (\ref{eqn:Uprime2}), we thus have
    \begin{align}\label{eqn:Uprime1vs2}
        q^{n - k} U_m = \sum_v n(v) \leq q^n \max_v n(v).
    \end{align}

    The greedy strategy finds a vector $v$ that attains $\max_v n(v)$ as $\mathsf{CA}_{m + 1}$ and adds it to the covering array. When there are more than one such vectors, choose any of them arbitrarily. This way, $\max_v n(v)$ is nothing but $U_m - U_{m + 1}$, and consequently (\ref{eqn:Uprime1vs2}) leads to
    \begin{align}
        q^{n - k} U_m &\leq q^n (U_m - U_{m + 1}) \nonumber\\
        q^{-k} U_m &\leq U_m - U_{m + 1},
    \end{align}
    which is exactly (\ref{recursion}).
\end{proof}

\subsubsection{Deterministic versus Random Coding}

In Theorem \ref{thm:main}, we establish that it is possible to reduce the overhead of DMRA coding to within $1 + \log_2 e$ bits, irrespective of the number of total users $n$. This is consistent with \cite{song25tit}. We, however, emphasize that our approach does not require any random coding argument. In fact, our proof as an achievability scheme specifies what a deterministic DMRA code can achieve, for any finite values of $(n, k, q)$, and such a deterministic DMRA code can be readily constructed via a greedy strategy; --- see \cite{cohen97aetg, bryce07density} for its algorithmic implementation.

\subsubsection{Bit-by-bit Coding}

The cost of constructing covering arrays increases rapidly with the message alphabet size $q$. This motivates us to adopt a simple bit-by-bit coding scheme when $q = 2^r$ for some integer $r$. We split each user's message into $r$ bits, and use a binary covering array to output an index for each bit. The thus obtained $r$ indices, which are mutually independent, are then jointly encoded by an overall lossless code.

In spirit of Theorem \ref{thm:main}, the expected codeword length of this bit-by-bit coding scheme is upper bounded by $\bar{\ell} < k r + 1 + r \log e$, because the one bit overhead due to integer-valued codeword length assignment can be amortized among $r$ bits. In contrast, directly encoding over size-$q$ messages yields $\bar{\ell} < k r + 1 + \log e$. So we see that there is a typically modest efficiency loss due to bit-by-bit coding, but the overhead is still independent of $n$.

\subsubsection{Fixed-length Coding}

A useful observation in \cite{cohen97aetg} regarding the greedy strategy of constructing covering arrays is that the number of rows, $M$, in $\mathsf{CA}(M; k, n, q)$ is a finite number, on the order of $\log n$. This can be readily shown by noting that $U_m$ as the number of uncovered patterns after adding $\mathsf{CA}_m$ must be an integer. Iteratively applying (\ref{recursion}) yields:
\begin{equation}
    U_m \leq U_0 (1 - q^{-k})^m.
\end{equation}
Therefore, whenever $U_0 (1 - q^{-k})^m < 1$, it follows that $U_m = 0$. This leads to an upper bound on $M$ as
\begin{equation}
    M \leq \frac{\log \binom{n}{k} + k \log q}{\log \frac{1}{1 - q^{-k}}}, 
\end{equation}
which grows logarithmically with $n$, for fixed $k$ and $q$.

Indeed, such a $\log n$ growth is also necessary for $M$. This can be argued by noting that any two columns in a covering array cannot be exactly identical. Hence the pigeon hole principle indicates that $q^M > n$, leading to $M > \log n /\log q$. For more refined estimates of $M$, see, e.g., \cite{sarkar17siam}.

Since $M$ is finite, in the DMRA code we can use the Huffman algorithm to attain the most efficient representation of $g(\uline{S}, \uline{W})$. On the other hand, since $M$ is on the order of $\log n$, in some applications we may even skip the variable-length lossless coding step and directly use a fixed-length representation of $g(\uline{S}, \uline{W})$, attaining $\bar{\ell}$ on the order of $\log \log n$, which is still quite modest over a wide range of $n$.

This $\log\log n$ overhead is similar to the minimum feedback rate for collision‑free scheduling in massive random access \cite{kang21tit}. The common underlying reason in both settings is that a combinatorial structure with only logarithmic size in $n$, namely a covering array herein or a perfect hashing family in \cite{kang21tit}, suffices to resolve the uncertainty about which users are active, leading to an overhead that grows very slowly with the system scale.

\section{Numerical Experiments}\label{sec:numerical}

We present some preliminary numerical experiments to corroborate the analysis in the previous section. As established in the proof of Theorem \ref{thm:main}, the construction of the DMRA code is based on an iterative greedy strategy for constructing covering arrays, ensuring that in each iteration the newly added vector covers as many previously uncovered patterns as possible. For efficient implementation, we employ the deterministic density algorithm \cite{bryce07density}, which guarantees that each added vector covers at least the average number of uncovered patterns. This is not strictly greedy but is still sufficient to enforce the number of uncovered patterns to decrease at least geometrically with the index. For simplicity, the numerical experiments primarily focus on binary DMRA codebooks, i.e., $q = 2$.

\begin{figure}[htbp]
   \centering
   \subfigure[$k=2$]{
       \includegraphics[width=0.47\textwidth]{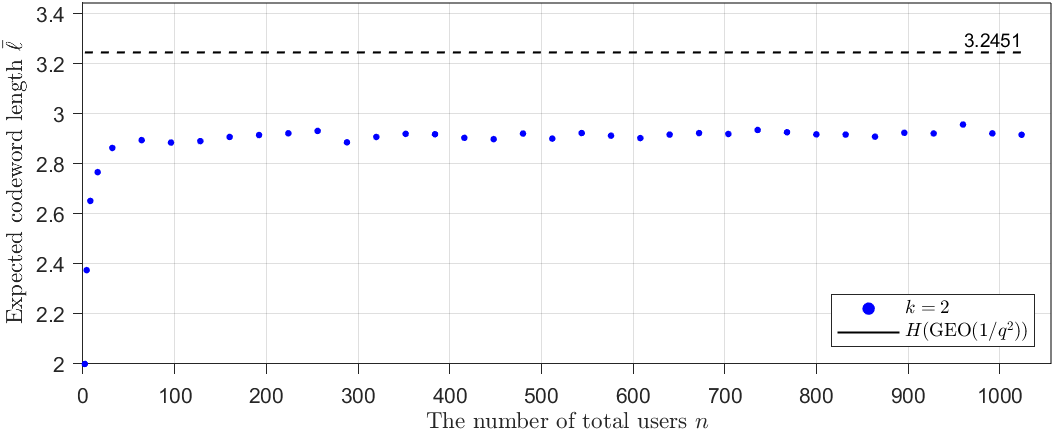}
   }\\
   \subfigure[$k=3$]{
       \includegraphics[width=0.47\textwidth]{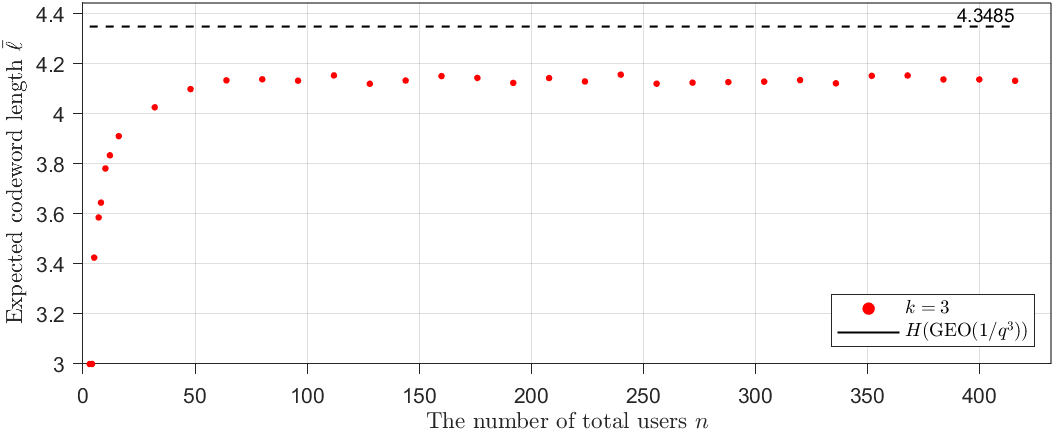}
   }
   \caption{Expected codeword length $\bar{\ell}$ vs number of total users $n$}
   \label{fig:varlength}
\end{figure}

To minimize the expected codeword length, we encode the selected index of the covering array with a uniquely decodable lossless code, in particular, the Huffman code, as described in the proof of Theorem \ref{thm:main}.

The resulting expected codeword lengths are displayed in Figure \ref{fig:varlength}, for $k = 2$ (upper) and $3$ (lower). As shown, the expected codeword length $\bar{\ell}$ in dots slightly fluctuates with $n$, due to the integer effect of Huffman coding, but still largely remains stable without any trend of growth, as $n$ increases.

We can read from Figure \ref{fig:varlength} that for large $n$, the overhead beyond $k$ bits (the net payload of active users) is within $1$ and $1.2$ bits for $k = 2$ and $3$, respectively.

We also plot $H(\mathrm{GEO}(q^{-k}))$, the entropy of the geometric distribution with expectation $q^k$, in dashed horizontal lines, with its numerical value indicated as well. There exists a noticeable gap between $\bar{\ell}$ and $H(\mathrm{GEO}(q^{-k}))$, confirming the super-geometric decay of uncovered patterns across the indices of covering array.

\section{Conclusion}\label{sec:conclusion}

The DMRA model, a massive access setup describing the scenario where a base station sends messages to sporadic users, is studied via the lens of covering arrays. A general upper bound of $1 + \log_2 e$ bits independent of the number of total users is established, on the coding overhead due to the lack of global knowledge of user activity at individual users. The analysis is based on deterministic codes, without requiring ensemble analysis over random codes. Furthermore, the deterministic codes can be constructed via greedy search.

For future research, of particular importance and interest is exploring more efficient and systematic construction of covering arrays, suitable for the needs of DMRA coding. Greedy algorithms are computationally expensive and quickly become infeasible as $n$ or $q$ increases. The lack of structure in covering arrays constructed by greedy algorithms also demands high storage cost, and hinders efficient addressing for encoding and decoding. Resolving these challenges will be necessary for paving the way towards practical applications of DMRA in future massive access systems.


\bibliographystyle{IEEEtran}
\bibliography{reference}

@article{song25tit,
  title={Coded downlink massive random access and a finite de {F}inetti theorem},
  author={Song, Ryan and Attiah, Kareem M and Yu, Wei},
  journal={IEEE Transactions on Information Theory},
  volume={71},
  number={9},
  pages={6932--6949},
  year={2025},
  publisher={IEEE}
}

@article{chen20jsac,
  title={Massive access for 5{G} and beyond},
  author={Chen, Xiaoming and Ng, Derrick Wing Kwan and Yu, Wei and Larsson, Erik G and Al-Dhahir, Naofal and Schober, Robert},
  journal={IEEE Journal on Selected Areas in Communications},
  volume={39},
  number={3},
  pages={615--637},
  year={2020},
  publisher={IEEE}
}

@article{kang21tit,
  title={Minimum feedback for collision-free scheduling in massive random access},
  author={Kang, Justin and Yu, Wei},
  journal={IEEE Transactions on Information Theory},
  volume={67},
  number={12},
  pages={8094--8108},
  year={2021},
  publisher={IEEE}
}

@inproceedings{song22isit,
  title={Coded categorization in massive random access},
  author={Song, Ryan and Attiah, Kareem M and Yu, Wei},
  booktitle={IEEE International Symposium on Information Theory (ISIT)},
  pages={2868--2873},
  year={2022}
}

@article{kalor22tc,
  title={Common message acknowledgments: Massive {ARQ} protocols for wireless access},
  author={Kal{\o}r, Anders E and Kotaba, Rados{\l}aw and Popovski, Petar},
  journal={IEEE Transactions on Communications},
  volume={70},
  number={8},
  pages={5258--5270},
  year={2022},
  publisher={IEEE}
}

@article{colbourn04lm,
  title={Combinatorial aspects of covering arrays},
  author={Colbourn, Charles J},
  journal={Le Matematiche},
  volume={59},
  number={1, 2},
  pages={125--172},
  year={2004}
}

@inproceedings{torres13iss,
  title={Survey of covering arrays},
  author={Torres-Jimenez, Jose and Izquierdo-Marquez, Idelfonso},
  booktitle={2013 15th International Symposium on Symbolic and Numeric Algorithms for Scientific Computing},
  pages={20--27},
  year={2013},
  organization={IEEE}
}

@article{lawrence11tejoc,
  title={A survey of binary covering arrays},
  author={Lawrence, Jim and Kacker, Raghu N and Lei, Yu and Kuhn, D Richard and Forbes, Michael},
  journal={Electronic Journal of Combinatorics},
  pages={P84},
  year={2011}
}

@article{torres19access,
  title={Methods to construct uniform covering arrays},
  author={Torres-Jimenez, Jose and Izquierdo-Marquez, Idelfonso and Avila-George, Himer},
  journal={IEEE Access},
  volume={7},
  pages={42774--42797},
  year={2019},
  publisher={IEEE}
}

@article{kleitman_and_spencer,
  title={Families of k-independent sets},
  author={Kleitman, Daniel J and Spencer, Joel},
  journal={Discrete Mathematics},
  volume={6},
  number={3},
  pages={255--262},
  year={1973},
  publisher={Elsevier}
}

@article{katona,
  title={Two applications (for search theory and truth functions) of {S}perner type theorems},
  author={Katona, Gyula OH},
  journal={Periodica Mathematica Hungarica},
  volume={3},
  number={1-2},
  pages={19--26},
  year={1973},
  publisher={Akad{\'e}miai Kiad{\'o}, co-published with Springer Science+ Business Media BV~…}
}

@online{colbourn25online,
  title={Covering Array Tables for t= 2, 3, 4, 5, 6, http://www.public.asu.edu/\~{}ccolbou/src/tabby/catable.html},
  author={Colbourn, Charles J},
  year={2009}
}

@article{gargano93gac,
  title={Sperner capacities},
  author={Gargano, Luisa and K{\"o}rner, J{\'a}nos and Vaccaro, Ugo},
  journal={Graphs and Combinatorics},
  volume={9},
  number={1},
  pages={31--46},
  year={1993},
  publisher={Springer}
}

@article{sloane93jcd,
  title={Covering arrays and intersecting codes},
  author={Sloane, N J A},
  journal={Journal of Combinatorial Designs},
  volume={1},
  pages={51--63},
  year={1993},
  publisher={John Wiley and Sons}
}

@article{bounds_u,
  title={t-covering arrays: Upper bounds and {P}oisson approximations},
  author={Godbole, Anant P and Skipper Daphne E and Sunley Rachel A},
  journal={Combinatorics, Probability and Computing},
  volume={1996},
  number={5},
  pages={105--117},
  year={1996},
  publishers={Cambridge University Press}
}

@article{cohen97aetg,
  title={The {AETG} system: An approach to testing based on combinatorial design},
  author={Cohen, David M. and Dalal, Siddhartha R. and Fredman, Michael L. and Patton, Gardner C.},
  journal={IEEE Transactions on Software Engineering},
  volume={23},
  number={7},
  pages={437--444},
  year={1997},
  publisher={IEEE}
}

@article{bryce07density,
  title={The density algorithm for pairwise interaction testing},
  author={Bryce, Ren{\'e}e C and Colbourn, Charles J},
  journal={Software Testing, Verification and Reliability},
  volume={17},
  number={3},
  pages={159--182},
  year={2007},
  publisher={Wiley Online Library}
}

@inproceedings{polyanskiy17isit,
  title={A perspective on massive random-access},
  author={Polyanskiy, Yury},
  booktitle={IEEE International Symposium on Information Theory (ISIT)},
  pages={2523--2527},
  year={2017}
}

@article{fengler19arxiv1,
  title={Massive {MIMO} unsourced random access},
  author={Fengler, Alexander and Caire, Giuseppe and Jung, Peter and Haghighatshoar, Saeid},
  journal={arXiv preprint arXiv:1901.00828},
  year={2019}
}

@article{chen17tit,
  title={Capacity of Gaussian many-access channels},
  author={Chen, Xu and Chen, Tsung-Yi and Guo, Dongning},
  journal={IEEE Transactions on Information Theory},
  volume={63},
  number={6},
  pages={3516--3539},
  year={2017},
  publisher={IEEE}
}

@inproceedings{robin21isit,
  title={Capacity bounds and user identification costs in {R}ayleigh-fading many-access channel},
  author={Robin, Jyotish and Erkip, Elza},
  booktitle={IEEE International Symposium on Information Theory (ISIT)},
  pages={2477--2482},
  year={2021}
}

@article{ravi21tit,
  title={Scaling laws for {G}aussian random many-access channels},
  author={Ravi, Jithin and Koch, Tobias},
  journal={IEEE Transactions on Information Theory},
  volume={68},
  number={4},
  pages={2429--2459},
  year={2021},
  publisher={IEEE}
}

@article{gao22tit,
  title={Energy efficiency of massive random access in {MIMO} quasi-static {R}ayleigh fading channels with finite blocklength},
  author={Gao, Junyuan and Wu, Yongpeng and Shao, Shuo and Yang, Wei and Poor, H Vincent},
  journal={IEEE Transactions on Information Theory},
  volume={69},
  number={3},
  pages={1618--1657},
  year={2022},
  publisher={IEEE}
}

@article{torres12ins,
  title={New bounds for binary covering arrays using simulated annealing},
  author={Torres-Jimenez, Jose and Rodriguez-Tello, Eduardo},
  journal={Information Sciences},
  volume={185},
  number={1},
  pages={137--152},
  year={2012},
  publisher={Elsevier}
}

@article{torres17optl,
  title={A two-stage algorithm for combinatorial testing},
  author={Torres-Jimenez, Jose and Avila-George, Himer and Izquierdo-Marquez, Idelfonso},
  journal={Optimization Letters},
  volume={11},
  number={3},
  pages={457--469},
  year={2017},
  publisher={Springer}
}

@article{torres18dcc,
  title={Covering arrays of strength three from extended permutation vectors},
  author={Torres-Jimenez, Jose and Izquierdo-Marquez, Idelfonso},
  journal={Designs, Codes and Cryptography},
  volume={86},
  number={11},
  pages={2629--2643},
  year={2018},
  publisher={Springer}
}

@article{torres18mpe,
  title={A simulated annealing algorithm to construct covering perfect hash families},
  author={Torres-Jimenez, Jose and Izquierdo-Marquez, Idelfonso},
  journal={Mathematical Problems in Engineering},
  volume={2018},
  number={1},
  pages={1860673},
  year={2018},
  publisher={Wiley Online Library}
}

@article{sarkar17siam,
  title={Upper bounds on the size of covering arrays},
  author={Sarkar, Kaushik and Colbourn, Charles J},
  journal={SIAM Journal on Discrete Mathematics},
  volume={31},
  number={2},
  pages={1277--1293},
  year={2017},
  publisher={SIAM}
}

@article{francetic17jcd,
  title={Asymptotic size of covering arrays: an application of entropy compression},
  author={Franceti{\'c}, Nevena and Stevens, Brett},
  journal={Journal of Combinatorial Designs},
  volume={25},
  number={6},
  pages={243--257},
  year={2017},
  publisher={Wiley Online Library}
}

\end{document}